\newtheorem*{lemA.1}{Lemma A.1}
\newtheorem*{lemA.2}{Lemma A.2}
\newtheorem*{lemA.3}{Lemma A.3}
\newtheorem{theorem}{Theorem}
\newtheorem{lemma}{Lemma}
\newtheorem{definition}{Definition}
\newtheorem{problem}{Problem}
\newcommand{\color}[3]{}
\newcommand{\op}[1]{\mathsf{#1}}
\begin{document}
\title{Gaussian quantum computation with oracle-decision problems}
    \author{Mark R. A. Adcock,$^{1}$ Peter H\o yer,$^{1,2}$ and Barry C. Sanders$^{1}$
   }
 \affiliation{$^{1}$Institute for Quantum Information Science,
    University of Calgary, Calgary, Alberta, Canada, T2N 1N4. Email: \texttt{mkadcock@qis.ucalgary.ca}\\
 $^{2}$Department of Computer Science,
University of Calgary, 2500 University Drive N.W., Calgary,
Alberta, Canada, T2N 1N4. Email: \texttt{hoyer@ucalgary.ca} }
\begin{abstract}
We study a simple-harmonic-oscillator quantum computer solving
oracle decision problems. We show that such computers can perform
better by using nonorthogonal Gaussian wave functions rather than
orthogonal top-hat wave functions as input to the information
encoding process. Using the Deutsch–-Jozsa problem as an example,
we demonstrate that Gaussian modulation with optimized width
parameter results in a lower error rate than for the top-hat
encoding. We conclude that Gaussian modulation can allow for an
improved trade-off between encoding, processing and measurement of
the information.
\end{abstract}
\pacs{03.67.Ac}
\maketitle

\section{Introduction}

The process of computation involves taking an input and converting
it into an output through the application of an
algorithm~\cite{CLRS01, Si97}. There are two versions of classical
computation: digital and analogue. Digital computers use
finite-length bit strings as the input and output of the computer.
Analogue computers typically use the electrical quantities of
inductance and charge as analogues of mass and displacement so
that continuously varying voltages may output, for example, the
simulated trajectory of a spacecraft~\cite{Ja74}.

Quantum computation also has two versions of information
processing referred to as discrete-variable~\cite{NC00} and
continuous-variable~\cite{BP03} quantum computation. In the
discrete-variable version, algorithms employ a finite numbers of
qubits~\cite{NC00} or qudits~\cite{GKP01} analogously to bits in
the classical digital case.  Continuous-variable quantum
computation uses continuously-parameterized quantum systems to
process discrete information.

Comparing the performance of a well-known problem and its
algorithm in both discrete and continuous settings provides useful
insight into how the different versions of quantum computation
differ in their performance. The historically important
Deutsch--Jozsa problem~\cite{De85,DJ92} has been used to show the
different algorithm performance that can be achieved in the two
settings. In the discrete variable setting, the Deutsch--Jozsa
problem can solved exactly with a single query of a quantum
oracle~\cite{DJ92}. In the continuous variable setting, quantum
algorithms employing orthogonal wave functions as the
computational basis necessarily have single-query success
probability less than one~\cite{AHS09}.

Continuous-variable studies are often based on quantum optics
because of the wide variety of tools that have been developed to
process and measure optical field
modes~\cite{BP03,FSB98,GG02,AFK08,AYA07,Br98,EP02,BSBN02,BS02}.
Continuous-variable quantum algorithms have been studied that use
a single mode~\cite{AHS09} and multiple modes~\cite{CHMS10}. In
continuous-variable quantum information procedures, the input
state is typically a Gaussian wave function over the canonical
position representation with the physical system being a harmonic
oscillator (equivalently a single-mode optical field~\cite{Le97}).
However the unbounded extent of these wave functions does not
naturally fit with the finite length of the information strings
being processed.

To deal with the problem of finite-length information strings, the
single-mode algorithm defined in~\cite{AHS09} has input states
represented by orthogonal wave functions. Information is
represented by finite-length bit strings $z \in\{0,1\}^N$ with $N$
the number of bits. These strings are encoded into a region of the
momentum domain $p$ extending from $-P$ to $P$. A regular lattice
of $N$ discrete values of $p$ are embedded in this domain such
that the $i^{\rm{th}}$ bit of $z$, is assigned to $p$ as follows:
$p_{z_i} := (1+2i-N\pm 1)P/N$. The lattice is thus $\{p_{z_i}\}$
with spacing $2P/N$.

An orthogonal basis of top-hat functions is formed from the
$p_{z_i}$, and the encoded momentum wave function is expressed as
$\frac{1}{\sqrt{2P}}\sum_{i=0}^{N-1}(-1)^{z_i}\left|p_{z_i}\right\rangle$.
The kets $\left|p_{z_i}\right\rangle$ are phase-modulated by their
corresponding bit values, and with this phase modulation, each of
the possible $2^N$ strings is uniquely represented. Note that the
constant wave function is the top-hat function extending from $-P$
to $P$.

There is a translational invariance between computational basis
states because each of the top-hat basis functions occupies an
identically-sized region of momentum space and because the string
$z'$ is obtained from the string $z$ by the translation
$z'=z\oplus(z\oplus z')$. We can regard this finite basis as an
infinite basis modulated by a top-hat function extending from $-P$
to $P$, which has the effect of truncating the allowed strings
from an infinite domain to being from $0$ to~$N-1$.

In this paper, we extend this approach by shaping the overall
top-hat function with a Gaussian having width set by its standard
deviation $\sigma$. The Gaussian is truncated for $|p|>P$, which
has the effect that the computational basis states, although still
orthogonal, are no longer translations of each other but are more
complicated Gaussian-modulated basis states. At first glance, this
latter feature would appear to provide a disadvantage, but using
the mathematical properties of the error function, we prove that
the single-query success probability for the Gaussian case
$\Pr_\checkmark^\sharp > \Pr_\checkmark^\bot$ is greater than the
single-query success probability for the orthogonal case.

This result is enabled by the extra degree of freedom manifest in
the spread of the Gaussian wave function. Tuning the available
parameters of encoding width, the spread of the Gaussian wave
function and the width of the measurement window results in a more
efficiently encoded momentum wave function leading to the improved
single-query success probability. Note that the use of the top-hat
basis to encode information into a single harmonic oscillator is
different than the approach used in~\cite{GKP01}, where
information is encoded into a collection of harmonic oscillators.

Our paper is organized as follows. In Sec.~\ref{sec:background},
we define oracle-decision problems and present the Deutsch--Jozsa
~\cite{DJ92,De85,CEMM98} in this context. We give an overview of
the single-mode algorithm employing orthogonal states~\cite{AHS09}
that solves the Deutsch--Jozsa problem. We give a brief
introduction into the coherent states of the harmonic oscillator
and define the single-mode algorithm in these terms. In Sec.\
\ref{sec:sharpcutoff}, we prove that the single-query success
probability claimed for the Gaussian model with truncated momentum
is better than that achieved using orthogonal states~\cite{AHS09}.
We conclude in Sec.\ \ref{sec:conclusions}.

\section{Background}\label{sec:background}

An important aspect of quantum information processing is the
ability to solve oracle decision problems with increased
efficiency compared to classical information processing. In the
case of oracle decision problems, efficiency is measured in terms
of the number of oracle queries required to solve the problem.
Comparing the single-query success probability of an algorithm in
both the discrete and continuous quantum settings provides useful
insights into the advantages of one setting over the other. The
Deutsch--Jozsa oracle-decision problem has been studied in both
the discrete and continuous-variable quantum settings
~\cite{DJ92,BP03,AHS09,AHS11}. Here we explore a single-mode
continuous-variable quantum algorithm where the input state is a
Gaussian wave function.

\subsection{Oracle Decision Problems and the Deutsch--Jozsa Problem}

The challenge of an oracle decision problem is to identify which
of two mutually disjoint sets contains a unique $N$-bit string by
making the fewest possible queries to an oracle. The oracle
decision problem is typically couched in terms of a function $f$
that maps $N=2^n$-bit strings to a single bit
\begin{align}\label{classical_gl1}
f:\{0,1\}^n\mapsto\{0,1\}.
\end{align} Any Boolean function on $n$ bits can also be represented by a
string of $N=2^n$ bits, in which the $i^{\rm{th}}$ bit $z _i$ is
the value of the function on the $i^{\rm{th}}$ bit string, taken
in lexicographical order.

For our analysis, we repeat the definition of an oracle decision
problem given in~\cite{AHS11} as follows.
\begin{definition}\label{def:1}
An oracle decision problem is specified by two non-empty, disjoint
subsets $A,B \subset \{0,1\}^N$. Given a string $z \in A \cup
B=C$, the oracle-decision problem is to determine whether $z\in A$
or $z\in B$ with the fewest queries to the oracle possible.
\end{definition}

For completeness, we also repeat the definition of the
Deutsch--Jozsa problem~\cite{DJ92,CEMM98} in terms of
Definition~\ref{def:1} given in~\cite{AHS11} as follows.
\begin{problem}\sl\label{DJProblem}
Given set the set of balanced strings $A\subset\{0,1\}^N$, where
exactly $N/2$ elements take on the value 0 and the set of constant
strings $B\subset\{0,1\}^N$, where all $N$ elements take on the
same value everywhere, and a string $z$ randomly selected with
uniform distribution~$\mu$ such that $z\in_{\mu} C=A \cup B$, the
Deutsch--Jozsa Problem is to determine if $z\in A$ or $z\in B$
with the fewest oracle queries.
\end{problem}

\subsection{Single Mode Continuous Variable Algorithm with Orthogonal States}

In Fig.~\ref{fig:GVDJDiag}, we present the single-mode,
continuous-variable quantum algorithm~\cite{AHS09} for the
solution of oracle decision problems. The vertical lines on
Fig.~\ref{fig:GVDJDiag} represent the states after the various
steps of the algorithm using function notation rather than Dirac
notation. In function notation, the Dirac ket $|\phi\rangle$ is
represented by the square-integrable function
\begin{align}
\phi(x)=\left\langle x|\phi\right\rangle,
\end{align} where $x$ in this case is the continuous position variable.

The square-integrable condition means that orthogonal functions
may be used to represent the wave functions. One possible set of
orthogonal functions is the Fourier-transform pair realized by the
sinc/top-hat functions~\cite{AHS09}. In this case, the sinc
function
\begin{align}
    \phi_0(x)=\frac{\sin(P x)}{\sqrt{\pi P}\, x}
\end{align}
is the input state, and its Fourier transform is the momentum
top-hat function
\begin{equation} \label{eq:puremomentum}
   \tilde{\phi}(p)
        =\left(\frac{1}{\sqrt{2P}}\right)
            \left\{
                \begin{array}{ll}1,&\text{if}\,\,p\in[-P,P]\\0,&\text{if}\,\,p\notin[-P,P],\end{array}
            \right.
\end{equation} having finite
extent of $2P$ in the momentum domain. One nice feature of the
sinc/top-hat pair is that the finite extent of the top-hat
distribution allows for finite length information to be encoded
naturally.

\begin{figure}[tbp]
            \begin{center}
            \includegraphics[width=9 cm]{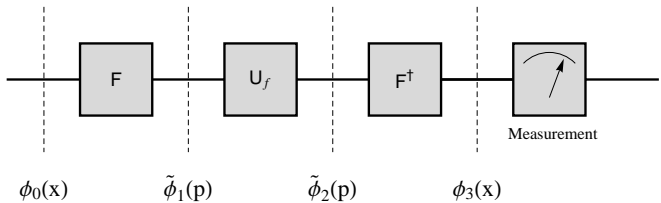}
            \end{center}
            \caption{Single-mode quantum circuit implementing the continuous-variable Deutsch--Jozsa algorithm~\cite{AHS09}.} \label{fig:GVDJDiag}
\end{figure}

The encoded position sinc function has unbounded extent, and
analysis of the optimum position measurement window reveals an
uncertainly relationship~\cite{AHS09} between the measurement
window $\delta$ and the encoding length $P$ expressed as
\begin{align}
P\delta =\frac{\pi}{2}.
\end{align}
As a result, the single-mode continuous variable algorithm is
necessarily probabilistic~\cite{AHS09} and has single-query
success probability
\begin{align}\label{Eq:orthogSucProb}
\text{Pr}^{\bot}_{\checkmark}=0.61.
\end{align} Here we demonstrate that this single query success probability may be improved
upon by using a Gaussian wave function as algorithm input.

\subsection{Single Mode Continuous Variable Algorithm with Gaussian States}

The sinc function employed as algorithm input in~\cite{AHS09}
cannot be readily created in the laboratory. Here we are inspired
by the ability to create and manipulate physical states of light
in the laboratory using the tools of quantum optics. In
particular, we employ coherent states, which may by represented by
Gaussian wave functions, as the input states to our algorithm.

The method of coherent states is well established and one feature
is that the coherent states are overcomplete~\cite{Ar72,Pe72}.
High quality lasers generate light fields that are
coherent~\cite{Le97}. The vacuum state is a displaced coherent
state and as such has the same quantum noise properties. Coherent
states are usually expressed as the ket
$\left|\alpha\right\rangle$ with $\alpha=x_0+{\rm{i}}p_0$ to
reflect that it is a state that is shifted from the vacuum by the
magnitude $|\alpha|$. Similarly, the vacuum is usually expressed
simply as the ket~$|0\rangle$.

In the position representation, the coherent state of laser light
may be expressed as
\begin{align}
\label{eq:testftnnotation} \phi(x;\alpha)=\left\langle x | \alpha
\right\rangle =\frac{e^{-\frac{1}{2}\left[(x-x_0)^2 - 2{\rm{i}}p_0
x + {\rm{i}}p_0 x_0\right]}}{\sqrt[4]{\pi }},
\end{align} where~$x_0=p_0=0$ corresponds to the vacuum state.
From the perspective of our quantum algorithm, the displaced
vacuum behaves no differently than the vacuum itself. Therefore
for notational simplicity, we chose to use the position
representation of the vacuum,
\begin{align}
\label{eq:testftnnotation} \left\langle x | 0 \right\rangle
=\frac{e^{-\frac{x^2}{2}} }{\sqrt[4]{\pi }},
\end{align} as the starting-point state for our algorithm.

Quantum optics has many tools that allow for the manipulation of
light. Of interest in our algorithm is light squeezing, where
quantum uncertainties are redistributed altering the shape of the
distribution. The squeezing operator is given in \cite{Le97} as
\begin{equation}
\hat{S}(\zeta)=\exp\left(\frac{\zeta}{2}\left(\hat{a}^2-\hat{a}^{\dag
2}\right)\right),
\end{equation}
where $\hat{a}=\hat{x}+{\rm{i}}\hat{p}$ is the annihilation
operator and $\hat{a}^{\dag}=\hat{x}-{\rm{i}}\hat{p}$ is the
creation operator. The quadratures $\hat{x}$ and $\hat{p}$ are
regarded as the position and momentum of the harmonic oscillator,
and the quantity $\zeta$ is referred to as the squeezing
parameter~\cite{Le97}.

In Dirac notation, the squeezed vacuum state may be expressed as
\begin{align}
|\phi_0\rangle=\hat{S}(\zeta)|0\rangle.
\end{align}
In our analysis, we use the standard deviation $\sigma=e^{-\zeta}$
to represent the effect of the squeezing operator on our function
representation of a Gaussian state.

We employ the squeezed vacuum as the input state to our algorithm,
which we represent in function notation as
\begin{align}
\label{Eq:AlgInput} \phi_0(x;\sigma)&=\langle x |\hat{S}(\zeta)|0\rangle\nonumber\\
&=\frac{e^{-\frac{x^2}{2 \sigma ^2}} }{\sqrt[4]{\pi }\sqrt{\sigma
}}.
\end{align} The subscript zero identifies this state as the algorithm input state
represented by the leftmost vertical line in
Fig.~(\ref{fig:GVDJDiag}). We prove that the algorithm with this
Gaussian input state has improved single-query success probability
over the algorithm employing orthogonal states as input.

\begin{theorem}\label{sharpclaim} Using the single-mode quantum circuit given in
Fig.~\ref{fig:GVDJDiag} with the coherent state given by
Eq.~(\ref{Eq:AlgInput}) as algorithm input and employing sharp
information cutoff, the single-query success probability
$\Pr^\sharp_\checkmark>\Pr^ \bot_\checkmark$ is greater than the
single-query success probability $\text{Pr}^{\bot}_{\checkmark}$
obtained using
orthogonal states given by Eq.~(\ref{Eq:orthogSucProb}).\\
\end{theorem}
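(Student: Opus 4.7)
The plan is to track the Gaussian-modulated momentum wave function through the circuit of Fig.~\ref{fig:GVDJDiag} exactly as in the top-hat analysis of~\cite{AHS09}, but carrying the extra width parameter $\sigma$ as a free variable, and then to optimize over $\sigma$ together with the measurement window $\delta$. Concretely, I would first write the post-oracle momentum state as the truncated, phase-modulated lattice sum
\begin{equation*}
\tilde{\phi}_z(p) = \mathcal{N}(\sigma,P)\,
\mathbb{1}_{[-P,P]}(p)\,e^{-p^2/2\sigma^2}
\sum_{i=0}^{N-1}(-1)^{z_i}\,\mathbb{1}_{\text{bin}_i}(p),
\end{equation*}
where $\mathcal{N}(\sigma,P)$ is the normalization fixed by $\int_{-P}^{P}|\tilde{\phi}_z|^2\,dp=1$ (this is where an $\operatorname{erf}(P/\sigma)$ factor enters). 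The sharp cutoff at $|p|=P$ is precisely the ``$\sharp$'' in the claim.

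Next, I would inverse-Fourier-transform to obtain $\phi_z(x)$, separate the two Deutsch--Jozsa cases, and note that for constant $z$ the sign pattern collapses and $\phi_z(x)$ is, up to a global phase, the Fourier transform of a single truncated Gaussian, whereas for balanced $z$ the alternating signs cause cancellation away from the lattice harmonics. The success probability with a measurement window $[-\delta,\delta]$ is then
\begin{equation*}
\Pr^{\sharp}_{\checkmark}(\sigma,\delta,P)=\int_{-\delta}^{\delta}|\phi_{z\in B}(x)|^2\,dx,
\end{equation*}
which, after carrying out the Fourier integral of a Gaussian against a complex exponential over $[-P,P]$, reduces to an expression built entirely from the complex error function evaluated at arguments involving $P/\sigma$ and $\sigma\delta$.

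After that I would fix the large-$N$ limit (so the lattice-induced fine structure is subdominant and only the envelope and window matter, exactly as in~\cite{AHS09}), and optimize the resulting erf-expression over $\sigma$ with $P\delta$ free. The essential analytic step is to show that as $\sigma$ is taken large compared with $P$ the formula smoothly recovers the top-hat result $0.61$, and that the derivative with respect to $1/\sigma$ at that limit is strictly positive, so reducing $\sigma$ to a finite optimum increases $\Pr^{\sharp}_{\checkmark}$. A closed-form optimum for $\sigma^{*}$ is obtained by setting the erf-derivative to zero, and substituting back gives an explicit value strictly above $0.61$.

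The main obstacle will be this last optimization: the success probability is not a single Gaussian integral but a product and ratio of erf-type functions coming from (i) the normalization over $[-P,P]$, (ii) the Fourier transform of the truncated Gaussian, and (iii) the integration of $|\phi_z|^2$ over $[-\delta,\delta]$. Ensuring monotonicity in the right variable, and extracting a rigorous analytic inequality $\Pr^{\sharp}_{\checkmark}(\sigma^{*},\delta^{*},P)>0.61$ rather than just a numerical one, will require standard but delicate identities for $\operatorname{erf}$ (convexity on $\mathbb{R}_{+}$, the bound $\operatorname{erf}(a)\operatorname{erf}(b)\le \operatorname{erf}(\sqrt{a^2+b^2})$, and the series expansion near zero), which I would invoke to pin down the sign of the relevant derivative at the top-hat limit.
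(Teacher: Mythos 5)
Your overall pipeline (Fourier transform, sharp truncation at $\pm P$ with an $\operatorname{erf}$ normalization, inverse transform to an erf-built position wave function, then joint optimization over $\sigma$ and $\delta$) matches the paper's, but there are two genuine gaps that would sink the argument as proposed. First, you define the success probability as $\int_{-\delta}^{\delta}|\phi_{z\in B}(x)|^2\,dx$ for ``the'' balanced case, as if all balanced strings behaved alike up to subdominant lattice structure. They do not, and the relevant quantity is not a single integral but the \emph{worst-case separation} $\bigl|\Pr[\phi_{z\in\mathrm{C}}]-\Pr[\phi_{z}]\bigr|$ minimized over all $\binom{N}{N/2}$ balanced strings $z$. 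The paper's essential structural ingredient --- proved in its Appendix A via an ordering argument on the real and imaginary parts of the per-bin erf increments $A_k$ --- is that only two balanced strings can attain this worst case: the antisymmetric balanced string (which is purely imaginary in the phasor sum and vanishes at $x=0$) and the symmetric balanced string (which is purely real and does \emph{not} vanish at $x=0$ for $\sigma>0$), with a crossover at $x_c\approx P\sigma^2/(4-P^2\sigma^2)$. Your proposal has no analogue of this dominance lemma, so even if you completed the optimization you could not rule out that some other balanced string spoils the claimed $0.68$.

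Second, because the success probability is $\min\bigl[\Delta_{\mathrm{AB}},\Delta_{\mathrm{SB}}\bigr]$, the optimum is \emph{not} found by setting a single erf-derivative to zero as you propose; it is a constrained minimax, attained where $\partial_\delta\Delta_{\mathrm{AB}}=0$ subject to the equalization constraint $\Delta_{\mathrm{AB}}=\Delta_{\mathrm{SB}}$ (at $\bar\delta=P\delta\approx 2.01$, $\bar\sigma=P\sigma\approx 1.67$). The unconstrained stationary point of the pair (where both derivatives vanish simultaneously, at $\bar\sigma\approx 2.11$) gives $\Delta_{\mathrm{SB}}\approx 0.54<0.61$ and would actually \emph{fail} to beat the orthogonal benchmark. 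Your monotonicity-at-the-top-hat-limit strategy, applied to a single objective, would steer you toward exactly this wrong critical point. Finally, a smaller note: in the paper's convention $\sigma$ is the position-space width, so the momentum envelope is $e^{-p^2\sigma^2/2}$ and the normalization is $\operatorname{erf}(P\sigma)$, with the top-hat limit at $\sigma\to 0$; your $e^{-p^2/2\sigma^2}$ and $\operatorname{erf}(P/\sigma)$ invert this, which is harmless only if you invert every subsequent limit consistently. The aspiration to a fully analytic (rather than numerical) inequality is more than the paper itself delivers, so that part is not a gap but an unkept promise you would still have to make good on.
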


\section{Bounding the Query Complexity of the Single Mode Algorithm with Gaussian Input States}
\label{sec:sharpcutoff}

The continuous-variable quantum algorithm using orthogonal states
solves Problem 1 with exponentially small error probability
$1/(2^m)$ in a linear number of queries~$\Theta(m)$~\cite{AHS09}.
This query complexity is dependent on the single-query success
probability~$\Pr^ \bot_\checkmark$, which is a measure of the
maximum achievable separation between the probability that the
encoded string is a balanced string versus a constant string.
Here, where the input is a Gaussian state, we demonstrate that the
key parameters affecting this separation are the encoding width,
the spread of the Gaussian wave function and the width of the
measurement window. We vary these parameters and discover their
optimum values in our proof of Theorem~1.

\subsection{Encoding Information into Gaussian States}

With reference to Fig.~\ref{fig:GVDJDiag}, the first step of the
algorithm is to take the Fourier transform~\cite{AHS09} of the
input state  $\phi_0(x;\sigma)$ giving
\begin{align}
\label{eq:unencodedpgaussian}
\tilde{\phi}_1(p\,;\sigma)=\frac{e^{-\frac{1}{2} p^2 \sigma ^2}
\sqrt{\sigma }}{\sqrt[4]{\pi }}.\end{align} The next step has the
oracle $\op{U}_f$ modulate the momentum Gaussian with the pulse
train that represents the encoding of the $N$-bit string $z$.

The modulated momentum wave function is
\begin{align}
\label{eq:eEncodedMomentumGaussian}
    \tilde{\phi}_2(p)=\tilde{\phi}_z^{(N)}(p\,;\sigma,P)=\eta f_z^{(N)}(p;P)\tilde{\phi}_1(p\,;\sigma),
\end{align} where we have labelled the state with all relevant
parameters. Descriptions of the elements of this equation follow.
The modulating square-wave encoded with the $N$-bit string $z$ is
\begin{align}
    f_z^{(N)}(p;P)=\sum_{i=0}^{N-1} (-1)^{z_i}\sqcap_i^{(N)}(p;P),
\end{align}
where the definition of the momentum bins given in~\cite{AHS09} is
repeated here as
\begin{align}
    \sqcap_i^{(N)}(p;P)=\left\{
            \begin{array}{ll}
                1,&\frac{p}{P}\in\left[-\left(1-2\frac{N-1-i}{N}\right),
                    -\left(1-2\frac{N-i}{N}\right)\right]\\0,&\text{otherwise.}
            \end{array}
        \right.
\end{align}
Note that the modulating function has the effect of chopping off
the tails of the momentum Gaussian outside $\pm P$ thus truncating
the Hilbert space.

The normalization factor, $\eta$, of the chopped distribution is
calculated as
\begin{align}
\int_{-\infty}^{\infty}\left|\tilde{\phi}_z^{(N)}(p;\sigma,
P)\right|^2\mathrm{d}p =\text{erf}(P\sigma),\end{align} where the
error function is
$\text{erf}(w)=\frac{2}{\sqrt{\pi}}\int_{0}^{w}e^{-t^2}dt$, and
\begin{align}
\label{eq:normalizationfactor}
    \eta=1/\sqrt{\text{erf}(P\sigma)}.
\end{align}
The penultimate step is to take the inverse Fourier transform of
this encoded momentum state.

The encoded position state is thus expressed as
\begin{align}\label{Eq:ModulatedPositionWaveFunctionPhi3(x)}
\phi_{3}(x)=\phi_{z}^{(N)}(x;\sigma,P)=\frac{\eta\,\text{e}^{-\frac{x^2}{2\sigma^2}}}{2\sqrt[4]{\pi}\sqrt{\sigma}}
\,M^{(N)}_z(x;\sigma,P).
\end{align} The effect of the encoded information is
completely captured in the position modulating term
\begin{align}
M^{(N)}_z(x;\sigma,P)=&\sum_{j=1}^{N}(-1)^{z_j}
\left[\text{erf}\left(\frac{\vartheta_j \sigma^2+{\rm i} x}{\sqrt{2}\sigma}\right)\right.\nonumber\\
&-\left.\text{erf}\left(\frac{\vartheta_{j-1} \sigma^2+{\rm i}
x}{\sqrt{2}\sigma}\right)\right], \label{Eq:PosnModTerm}
\end{align}
where
\begin{align}\label{Eq:VarThetaModulation}
\vartheta_{j}=P\left(\frac{2j-N}{N}\right).\end{align} The final
step is the measurement step.

We follow the same approach taken in~\cite{AHS09} and calculate
the probability of detecting a particular wave function in the
interval $\pm\delta$ as
\begin{align}\label{Eq:ProbabilityDef}
\Pr\left[\phi_{z}^{(N)}(x;\sigma,P)\right]=\int_{-\delta}^{\delta}\left
|\phi_{z}^{(N)}(x;\sigma,P)\right|^2\text{d}x.
\end{align} Since the wave function may be encoded with a constant string or a
balanced string, we need to determine the optimal value of
$\delta$ that maximizes our ability to distinguish between these
cases. Our approach is to determine which balanced functions
dominate all other balanced functions in the measurement window.

We begin by defining three pairs of $N$-bit strings: the
antisymmetric balanced (AB) strings, the symmetric balanced (SB)
strings and the constant (C) strings as
\begin{align}
\text{\small{AB}}&\in\left\{\underbrace{0\cdots
0}_{N/2}\underbrace{1\cdots 1}_{N/2}, \underbrace{1\cdots
1}_{N/2}\underbrace{0\cdots 0}_{N/2}\right\},\label{eq:string_AB}\\
\text{\small{SB}}&\in\left\{\underbrace{0\cdots
0}_{N/4}\underbrace{1\cdots 1}_{N/2}\underbrace{0\cdots 0}_{N/4},
\underbrace{1\cdots 1}_{N/4}\underbrace{0\cdots
0}_{N/2}\underbrace{1\cdots 1}_{N/4}\right\},\label{eq:string_SB}\\
 \text{\small{C}}&\in\left\{\underbrace{0\cdots 0}_{N},
\underbrace{1\cdots 1}_{N}\right\}. \label{eq:string_C}
\end{align} Note that the constant stings have zero bit
transitions, the antisymmetric balanced strings have one bit
transition, and the symmetric balanced strings have two
transitions. All other balanced strings have two or greater
transitions.

It is insightful to analyze the modulating term given by
Eq.~(\ref{Eq:PosnModTerm}) for $x=0$, which we express as
\begin{align}
M^{(N)}_z(0;\sigma,P)&=\sum_{j=1}^{N}(-1)^{z_j}
\left[\text{erf}\left(\frac{\vartheta_j \sigma}{\sqrt{2}}\right)
-\text{erf}\left(\frac{\vartheta_{j-1} \sigma
}{\sqrt{2}}\right)\right]. \label{Eq:ErfSumZero}
\end{align}We use the anti-symmetric property of the error function
$\text{erf}(a)=-\text{erf}(-a)$, and the property that
$\text{erf}(0)=0$. We also use the facts that $\vartheta_N=P$,
$\vartheta_N/2=0$ and $\vartheta_j=-\vartheta_{N-j}$
for~$j=0,1,\ldots,N$ in determining the following results.

For the constant case, all the terms cancel except the first and
last, and we have
\begin{align}
M^{(N)}_{z\in \text{C}}(0;\sigma,P)&=\pm
\left[\text{erf}\left(\frac{\vartheta_N \sigma}{\sqrt{2}}\right)
-\text{erf}\left(\frac{\vartheta_{0}
\sigma }{\sqrt{2}}\right)\right]\nonumber\\
&=\pm 2\,\text{erf}\left(\frac{P\sigma}{\sqrt{2}}\right).
\label{Eq:ErfSumZeroConst}
\end{align}
For the antisymmetric balanced case, all the terms cancel and we
have
\begin{align}
M^{(N)}_{z\in AB}(0;\sigma,P)&=\pm
\left[\text{erf}\left(\frac{\vartheta_N \sigma}{\sqrt{2}}\right)
+\text{erf}\left(\frac{\vartheta_{0}
\sigma }{\sqrt{2}}\right)\right]\nonumber\\
&=0. \label{Eq:ErfSumZeroASB}
\end{align}
For the symmetric case we have
\begin{align}
M^{(N)}_{z\in SB}(0;\sigma,P)&=\pm
\left[2\,\text{erf}\left(\frac{P\sigma}{\sqrt{2}}\right)
-4\,\text{erf}\left(\frac{P \sigma }{2\sqrt{2}}\right)\right].
\label{Eq:ErfSumZeroSym}
\end{align} Here the sum is non-zero except for in the
limiting case where $P\sigma\rightarrow 0$.

It is apparent from the results of Eqs.~(\ref{Eq:ErfSumZeroASB})
and~(\ref{Eq:ErfSumZeroSym}) that there are different classes of
balanced functions. Some balanced functions are only non-zero at
$x=0$ in the limit as $\sigma$ goes to zero, and some balanced
functions are zero at $x=0$ for all values of $\sigma$. We need to
determine which balanced functions are the two functions that
dominate the measurement region.

\begin{figure}[tbp]
            \begin{center}
            \includegraphics[width=9 cm]{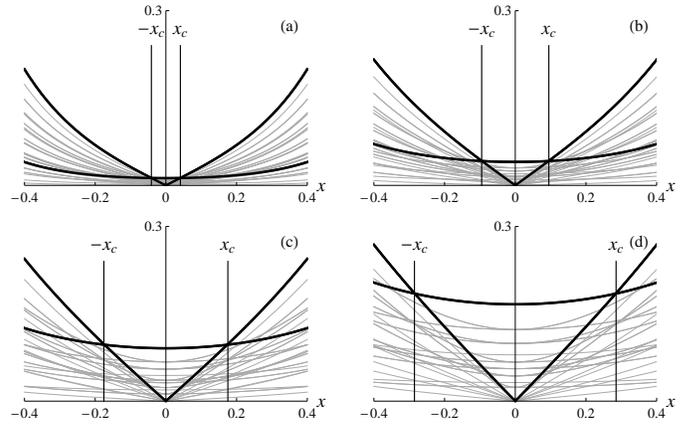}
            \end{center}
            \caption{Plots of the magnitude of the position modulation function $\left|M^{(8)}_z(x;\sigma,1)\right|$ given
            by Eq.~(\ref{Eq:ModulatedPositionWaveFunctionPhi3(x)}) for (a) $\sigma=0.4$, (b) $\sigma=0.6$,
            (c) $\sigma=0.8$, and (d) $\sigma=1.0$. For $\left|x\right|<x_c$, the symmetric balanced function (bold) dominates all other balanced functions,
            and for $\left|x\right|>x_c$, the antisymmetric balanced
            function (bold)
            dominates all other balanced functions.} \label{fig:DominateBalancedFuntions}
\end{figure}

In Appendix A, we prove that the magnitude of the position
modulation function $\left|M^{(N)}_z(x;\sigma,1)\right|$ given by
Eq.~(\ref{Eq:ModulatedPositionWaveFunctionPhi3(x)}) and subject to
the balanced condition $\sum_{i=1}^{N} (-1)^{z_i}=0$, is maximized
by either the antisymmetric balanced function given by
Eq.~(\ref{eq:string_AB}) or the symmetric balanced function given
by Eq.~(\ref{eq:string_SB}). For $N=8$, the situation is presented
in Fig.~\ref{fig:DominateBalancedFuntions}, where it be seen that
the actual dominating function is dependent on the value
of~$\sigma$.

In Fig.~\ref{fig:DominateBalancedFuntions}, the crossover point is
drawn and is approximated in Appendix \ref{App:CalcDet} as
\begin{align}
x_c\approx \frac{P\sigma^2}{(4 - P^2\sigma^2)}. \label{Eq:XcDef}
\end{align}
For $|x|<x_c$, the symmetric balanced function (shown in bold)
dominates, and for $|x|>x_c$, the antisymmetric balanced function
(shown in bold) dominates. All remaining balanced functions, of
which there are a total of $\binom{8}{4}=70$ are depicted as light
gray lines in Fig.~\ref{fig:DominateBalancedFuntions}. We use
these results to complete the proof of Theorem~1.

\subsection{Proof of Theorem~1}

We need to maximize the separation between detecting a balanced
string and a constant string. To this end, we define the following
quantities
\begin{align}\label{Eq:DeltaAB}
\Delta_{\text{AB}}(\sigma,P,\delta)=\left|\Pr\left[\phi_{z\in\text{C}}^{(N)}\right]
-\Pr\left[\phi_{z\in\text{AB}}^{(N)}\right] \right|,
\end{align} and
\begin{align}\label{Eq:DeltaSB}
\Delta_{\text{SB}}(\sigma,P,\delta)=\left|\Pr\left[\phi_{z\in\text{C}}^{(N)}\right]
-\Pr\left[\phi_{z\in\text{SB}}^{(N)}\right] \right|,
\end{align} where for brevity, we have suppressed the arguments in
Eq.~(\ref{Eq:ProbabilityDef}). The single-query success
probability is defined in these terms as
\begin{align}
\text{Pr}_{\checkmark}^{\sharp}=\min\left[\Delta_{\text{AB}}(\sigma,P,\delta),
\Delta_{\text{SB}}(\sigma,P,\delta)\right].
\end{align}
This expression assumes that either the antisymmetric or the
symmetric balanced strings dominate all other balanced strings in
the region $\pm\delta$ as presented in the previous subsection and
proved in Appendix A. We seek to determine the values of $\delta$
and $\sigma$ that maximize the separation between these two
probabilities.

With $\Delta_{\text{AB}}(\delta,\sigma,P)$ defined in
Eq.~(\ref{Eq:DeltaAB}), we
set~$\Delta_{\text{AB}}'(\delta,\sigma,P)=\frac{\partial}{\partial\delta}\Delta_{\text{AB}}(\delta,\sigma,P)$,
 which gives us
\begin{align}
\label{eq:findoptimumdelta}
&\Delta_{\text{AB}}'(\delta,\sigma,P)=\nonumber\\
&\frac{e^{-\frac{\delta^2}{\sigma^2}}}{2\sqrt{\pi}\sigma\,
\text{erf}(P\sigma)}\left[\text{erf}\left(\frac{P \sigma
^2-{\rm{i}}\delta}{\sqrt{2}\sigma}\right)^2+\text{erf}\left(\frac{P
\sigma ^2+{\rm{i}}\delta}{\sqrt{2}\sigma}\right)^2
\right.\nonumber\\
&\left.+\,2\,\text{erf}\left(\frac{P \sigma
^2-{\rm{i}}\delta}{\sqrt{2}\sigma}\right)\text{erf}\left(\frac{{\rm{i}}\,\delta}{\sqrt{2}
\sigma}\right)+2\,\text{erf}\left(\frac{{\rm{i}}\,\delta}{\sqrt{2}
\sigma}\right)^2
\right.\nonumber\\
&\left.+\,2\,\text{erf}\left(\frac{P \sigma
^2+{\rm{i}}\delta}{\sqrt{2}\sigma}\right)\text{erf}\left(\frac{{\rm{i}}\,\delta}{\sqrt{2}
\sigma}\right) \right].
 \end{align}
It suffices to set $\Delta_{\text{AB}}'(\delta,\sigma,P)=0$ to
maximize the separation. Before doing so, we elect to simplify
Eq.~(\ref{eq:findoptimumdelta}) by `normalizing' the standard
deviation $\sigma $ and the measurement `length' $\delta$ with
respect to the encoding `length'~$P$.

We assume that the uncertainty relation~\cite{AHS09} remains true
up to a constant. We express this as
\begin{equation}
\label{eq:deltahat} P \delta =\bar{\delta}.
\end{equation} This assumption and analysis of the error function arguments of
Eq.~(\ref{eq:findoptimumdelta}) result in a similar uncertainty
relationship between $P$ and $\sigma$, which we express as
\begin{equation}
\label{eq:sigma hat} P\sigma=\bar{\sigma}.
\end{equation} Making the substitutions given by Eq.~(\ref{eq:deltahat}) and
Eq.~(\ref{eq:sigma hat}) into Eq.~(\ref{eq:findoptimumdelta}) and
setting it to 0 results in the following expression
\begin{align}
\label{eq:findoptimumdelta_var_ab}
&\Delta_{\text{AB}}'\left(\bar{\delta},\bar{\sigma}\right)=0\nonumber\\
&=\left[\text{erf}\left(\frac{\bar{\sigma}^2-{\rm{i}}\,\bar{\delta}}{
\sqrt{2}\bar{\sigma}}\right)^2+\text{erf}\left(\frac{\bar{\sigma}^2+{\rm{i}}\,\bar{\delta}}{\sqrt{2}\bar{\sigma}}\right)^2
\right.\nonumber\\
&\left.+\,2\,\text{erf}\left(\frac{\bar{\sigma}^2-{\rm{i}}\,\bar{\delta}}{
\sqrt{2}\bar{\sigma}}\right)\text{erf}\left(\frac{{\rm{i}}\,\bar{\delta}}{\sqrt{2}
\bar{\sigma}}\right)+2\,\text{erf}\left(\frac{{\rm{i}}\,\bar{\delta}}{\sqrt{2}
\bar{\sigma}}\right)^2
\right.\nonumber\\
&\left.+\,2\,\text{erf}\left(\frac{\bar{\sigma}^2+{\rm{i}}\,\bar{\delta}}{\sqrt{2}\bar{\sigma}}\right)
\text{erf}\left(\frac{{\rm{i}}\,\bar{\delta}}{\sqrt{2}
\bar{\sigma}}\right) \right].
 \end{align} Note that the variables $\bar{\sigma}$ and $\bar{\delta}$ are, in some sense the
`normalized' Gaussian standard deviation $\sigma$ and the
measurement width $\delta$, `scaled' by the momentum `length' $P$.

Eq.~(\ref{eq:findoptimumdelta_var_ab}) is dependent on the two
variables, $\bar{\delta}$ and $\bar{\sigma}$ and is thus
insufficient to find the global optimum values of $\bar{\sigma}$
and $\bar{\delta}$. We obtain the needed constraint from the
similar equation derived from the symmetric balanced function.
Following the same steps we did in Eq.~(\ref{eq:findoptimumdelta})
and Eq.~(\ref{eq:findoptimumdelta_var_ab}), we obtain the
following expression
\begin{align}
\label{eq:findoptimumdelta_var_sym}
&\Delta_{\text{\tiny{SB}}}'\left(\bar{\delta},\bar{\sigma}\right)=0\nonumber\\
&=\left[\text{erf}\left(\frac{\bar{\sigma}^2-i \bar{\delta}}{2
\sqrt{2}
\bar{\sigma}}\right)+\text{erf}\left(\frac{\bar{\sigma}^2+i
\bar{\delta}}{2 \sqrt{2}
 \bar{\sigma}}\right)\right]\nonumber\\
  &\times \left[-\text{erf}\left(\frac{^2-i \bar{\delta}}{2 \sqrt{2} \bar{\sigma}}\right)
  +\text{erf}\left(\frac{\bar{\sigma}^2-i \bar{\delta}}{\sqrt{2}
   \bar{\sigma}}\right)\right.\nonumber\\
   &-\left.\text{erf}\left(\frac{\bar{\sigma}^2+i \bar{\delta}}{2 \sqrt{2} \bar{\sigma}}\right)
   +\text{erf}\left(\frac{\bar{\sigma}^2+i \bar{\delta}}{\sqrt{2} \bar{\sigma}}\right)\right].
 \end{align} We solve Eqs.~(\ref{eq:findoptimumdelta_var_ab}) and~(\ref{eq:findoptimumdelta_var_sym}) simultaneously to
establish the optimum values of the measurement lengths
$\bar{\delta}_{\text{AB}}$ and $\bar{\delta}_{\text{SB}}$ in terms
of the normalized standard deviation $\bar{\sigma}$.

\begin{figure}[tbp]
            \begin{center}
            \includegraphics[width=9 cm]{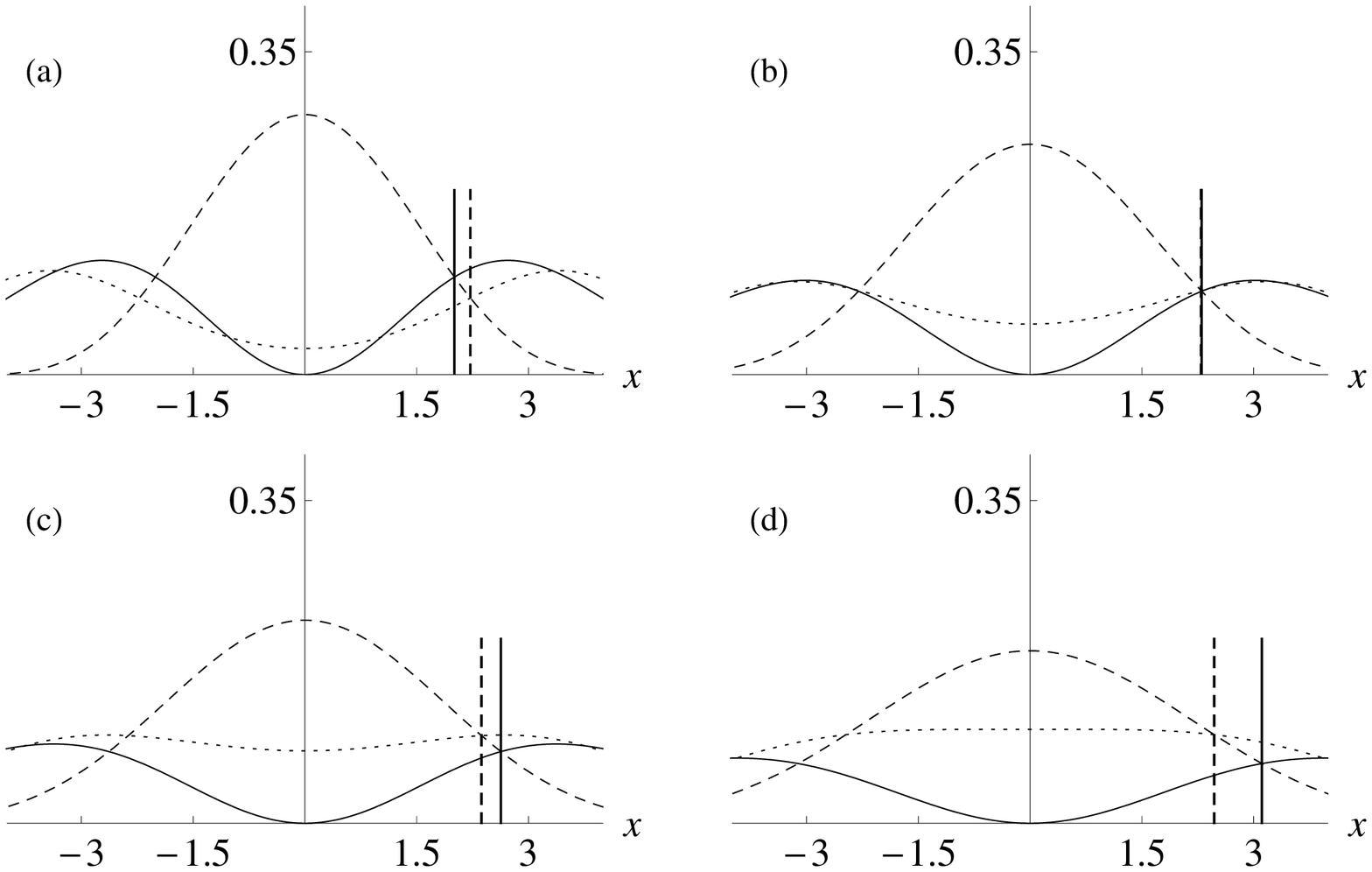}
            \end{center}
            \caption{Plots of $\left|\phi^{(N)}_z(x;\sigma,1)\right|^2$ for~$ z \in
\{\text{AB},\text{SB},\text{C}\}$ (solid, dotted, dashed
respectively) for: (a)~$\sigma=1.67$, (b)~$\sigma=2.11$,
(c)~$\sigma=2.5$, and (d)~$\sigma=3.0$. The solid vertical lines
in all four sub-plots correspond to
$\Delta'_{\text{\tiny{AB}}}\left(\bar{\delta},\bar{\sigma}\right)=0$,
and the dashed vertical lines correspond to
$\Delta'_{\text{\tiny{SB}}}\left(\bar{\delta},\bar{\sigma}\right)=0$.
Note that a) corresponds to the values that optimize the
single-query success probability. } \label{fig:GVDJoptimalsigmaP2}
\end{figure}

In Fig.~\ref{fig:GVDJoptimalsigmaP2}, we plot the distributions
$|\phi^{(N)}_z(x;\sigma,P)|^2$ for
$z\in\{\text{AB},\text{SB},\text{C}\}$ for several values of
$\sigma$.  We also plot vertical lines corresponding to the values
of $\bar{\delta}$ corresponding to
$\Delta^{'}_{\text{\tiny{AB}}}=0$ and
$\Delta^{'}_{\text{\tiny{SB}}}=0$. Note that there are values of
the normalized parameters~$\bar{\sigma}$ and $\bar{\delta}$ where
$\Delta^{'}_{\text{\tiny{AB}}}\left(\bar{\delta},\bar{\sigma}\right)=0$,
and
$\Delta^{'}_{\text{\tiny{SB}}}\left(\bar{\delta},\bar{\sigma}\right)=0$
simultaneously. This situation occurs where~$\bar{\delta}\approx
2.30$ and~$\bar{\sigma}\approx 2.11$ and is depicted in
Fig.~\ref{fig:GVDJoptimalsigmaP2}(b).

However, these values do not optimize the success probability
since
\begin{align}
&\Delta_{\text{\tiny{AB}}}(2.30,2.11)\approx 0.68,\,\, \text{and}\nonumber\\
&\Delta_{\text{\tiny{SB}}}(2.30,2.11)\approx
0.54.\label{Eq:Deltapeqzero}
\end{align} Lack of
optimality is manifest in the lower of the two above values, which
is less than single-query success probability for the orthogonal
case $\text{Pr}_{\checkmark}^{\bot}\approx 0.61$.

Increasing the value of~$\bar{\sigma}$ further serves to
increase~$\Delta_{\text{\tiny{AB}}}\left(\bar{\delta},\bar{\sigma}\right)$
and
decrease~$\Delta_{\text{\tiny{SB}}}\left(\bar{\delta},\bar{\sigma}\right)$,
which worsens the success probability. Reducing the value
of~$\bar{\sigma}$ brings them together. The quantity
$\Delta_{\text{\tiny{AB}}}\left(\bar{\delta},\bar{\sigma}\right)$
thus takes on its maximum value when
\begin{align}
\Delta'_{\text{\tiny{AB}}}\left(\bar{\delta},\bar{\sigma}\right)=0
\end{align}
subject to the constraint
\begin{align}
\Delta_{\text{\tiny{AB}}}\left(\bar{\delta},\bar{\sigma}\right)
=\Delta_{\text{\tiny{SB}}}\left(\bar{\delta},\bar{\sigma}\right).
\end{align} This occurs at a value of ~$\bar{\delta}{\text{\tiny{AB}}}\approx
2.01$ and~$\bar{\sigma}\approx 1.67$. Optimality is manifest since
\begin{align}
&\Delta_{\text{\tiny{AB}}}\left(\bar{\delta},\bar{\sigma}\right)
=\Delta_{\text{\tiny{SB}}}\left(\bar{\delta},\bar{\sigma}\right)\approx
0.68.
\end{align} The optimal situation is
depicted in Fig.~\ref{fig:GVDJoptimalsigmaP2}(a).

For $P=1$, we express the optimal parameters as
\begin{align}
\delta^\sharp\approx2.01,
\end{align} and
\begin{align}
\sigma^\sharp\approx1.67.
\end{align} For these values, the single query success probability
of the Gaussian model with sharp information cut-off model is
\begin{align}
\text{Pr}_{\checkmark}^{\sharp}\approx 0.68.
\end{align}
This upper bound is approximately 10$\%$ greater than that shown
for the model employing orthogonal states \cite{AHS09}, where
$\text{Pr}_{\checkmark}^{\bot}\approx0.61$ and
$\delta^\bot=\frac{\pi}{2}$.

At first glance, the increase in single-query success probability
of the Gaussian case over the orthogonal case appears somewhat
surprising. The Gaussian wave functions are coherent states and
therefore non-orthogonal~\cite{Pe72}. Intuitively, the orthogonal
states should be optimal especially given that the finite extent
of the momentum wave functions provides a natural fit for encoding
finite infirmation.

Upon closer inspection however, we see that the improvement
results from the ability to `tune' the Gaussian spread,
represented by $\sigma$, to match the encoding length~$P$. No such
`tuning' is possible with the finite states. We depict this in
Fig.~\ref{fig:Comparem}(a) for the constant case with $P=1$ and
optimal $\sigma^\sharp = 1.67$. We see that the encoded momentum
Gaussian wave function is on average narrower than the orthogonal
pulse wave function. Since the momentum and position wave
functions are Fourier transform pairs, narrowing of one results in
broadening of the other.

The subsequent broadening of the encoded Gaussian wave functions
results in a wider optimal measurement window
$\delta^\sharp>\delta^\bot$. This leads to a greater single-query
success probability and is represented by the shaded regions in
Fig.~\ref{fig:Comparem}(b). The larger dark gray region
corresponds to the single-query success probability offered by the
Gaussian wave functions. We thus conclude that the increased
success probability is achieved through the extra degree of
freedom afforded by $\sigma$. For $P=1$, this requires that the
input state be squeezed to $\sigma\approx 1.67$.

\begin{figure}[tbp]
            \begin{center}
            \includegraphics[width=9 cm]{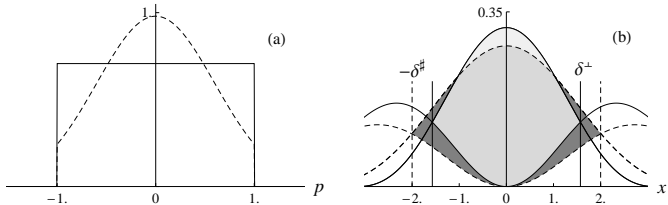}
            \end{center}
            \caption{Comparison between the orthogonal (solid) and
            the Gaussian (dashed) wave functions. The amplitude of the encoded momentum functions for
            the constant case is depicted in (a). The respective algorithm success probabilities are depicted by the shaded regions of
            (b), where light gray correspond to the orthogonal case
            and dark gray corresponds to the Gaussian case. Medium grey corresponds to the region of overlap. Note that the encoded momentum Gaussian is
            on average narrower than the orthogonal pulse, which results in respectively broader position wave functions.} \label{fig:Comparem}
\end{figure}

\section{Conclusions}
\label{sec:conclusions}

We have shown that a simple-harmonic-oscillator quantum computer
solving oracle decision problems performs better using
non-orthogonal Gaussian wave functions as the algorithm input
rather than the orthogonal top-hat wave functions. We have also
shown that the limiting case of the Gaussian model for $\sigma
\rightarrow 0$ and non-zero $P$ corresponds to the model employing
orthogonal states. In both cases, the computational bases are
orthogonal, and encoding takes place in the momentum domain and
information processing and measurement take place in the dual
position domain. Also in both cases, the single-query success
probability is dependent on the maximum separation between the
position wave function encoded with the constant string and the
position wave function encoded with the worst-case balanced
string, which is the antisymmetric balanced string.

In the orthogonal case, $N$-bit strings are uniquely encoded into
the computational basis formed by the top-hat functions, and the
overall-width of the encoded string is set by the encoding length
$P$. In the dual position domain, the encoded string is
represented by a sum of $N$ equi-angularly spaced, equi-length
phasors multiplied by a sinc function. The rate at which the
constant sinc function falls off its peak and the rate that the
antisymmetric balanced sinc function rises from its minimum sets
the size of the optimum position domain measurement window. Thus
the optimum position domain measurement is set by sharpness of the
sinc function, which is dependent on the encoding length only.

In the Gaussian case, $N$-bit strings are uniquely encoded into
the computational basis formed by more complicated
Gaussian-modulated basis states. The overall-width of the encoded
string is again set by the encoding length $P$, but it is also
shaped by the Gaussian spread $\sigma$. In the dual position
domain, the encoded string is represented by a sum of non
equi-angularly spaced and non equi-length phasors multiplied by a
Gaussian function. The rate at which the constant encoded function
falls off its peak and the rate that the antisymmetric balanced
function rises from its minimum is governed by both the Gaussian
spread $\sigma$ and the encoding length $P$. More importantly, the
rate set by the optimal values of $\sigma$ and $P$ is more gradual
than that achievable in the orthogonal case allowing for greater
separation between the two probabilities.

We thus conclude that the Gaussian allows for an improved
trade-off between encoding, processing and measuring of the
information. Encoding takes place in the momentum domain, and the
Gaussian takes better advantage of the space available to encode
the information. Correspondingly, information processing and
measurement take place in the dual position domain. The
Gaussian-encoded position wave function enables a wider
measurement window, which means more of the encoded information is
available for distinguishing between a wave function encoded with
a constant string and a wave function encoded with the worst-case
balanced string.

\section*{Acknowledgements}

We appreciate financial support from the Alberta Ingenuity Fund
(AIF), Alberta Innovates Technology Futures (AITF), Canada's
Natural Sciences and Engineering Research Council (NSERC), the
Canadian Network Centres of Excellence for Mathematics of
Information Technology and Complex Systems (MITACS), and the
Canadian Institute for Advanced Research (CIFAR).

\appendix

\section{Proofs of dominance of symmetric and antisymmetric balanced functions}\label{App:CalcDet}

In this appendix we prove that the symmetric and the antisymmetric
balanced functions maximize the magnitude of
$\left|M^{(N)}_z(x;\sigma,1)\right|$ given by
Eq.~(\ref{Eq:ModulatedPositionWaveFunctionPhi3(x)}) subject to the
balanced condition $\sum_{i=1}^{N} (-1)^{z_i}=0$ in three Lemmas.


\begin{lemma}
For the region $|x|\leq\pi$ with $\sigma=0$ and subject to the
balanced condition $\sum_{j=1}^N(-1)^{z_j}=0$,
$\max\left|\phi_{z}^{(N)}(x;\sigma,P)\right|$ occurs for
$z\in\{\text{AB}\}$.
\end{lemma}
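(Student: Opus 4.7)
My plan is to reduce the lemma to a discrete maximization problem about sums of complex exponentials via the $\sigma\to 0$ limit, and then attack it with a rearrangement/swap argument. Taking $\sigma\to 0$ inside Eq.~(\ref{Eq:ModulatedPositionWaveFunctionPhi3(x)}) (using $\text{erf}(P\sigma)\sim 2P\sigma/\sqrt{\pi}$ in the normalization~$\eta$) collapses the encoded momentum wave function to the orthogonal top-hat encoding of~\cite{AHS09}. Inverse Fourier transforming each bin separately and factoring out the common sinc envelope gives
\begin{equation*}
\phi_z^{(N)}(x;0,P)=C\,\text{sinc}\!\left(\tfrac{Px}{N}\right)\sum_{j=1}^{N}\epsilon_j\,e^{ic_jx},\quad \epsilon_j:=(-1)^{z_j},
\end{equation*}
with $c_j=(\vartheta_{j-1}+\vartheta_j)/2$ the center of bin $j$ and $C$ independent of $z$. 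Since the envelope is positive on $|x|<N\pi/P$, the lemma reduces to maximizing $|S_\epsilon(x)|:=\bigl|\sum_j\epsilon_j e^{ic_jx}\bigr|$ over balanced $\pm 1$-sequences.

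Next, I would exploit $\sum_j\epsilon_j=0$ via summation by parts to write
\begin{equation*}
S_\epsilon(x)=-\bigl(e^{i\phi}-1\bigr)\,e^{ic_0x}\sum_{j=1}^{N-1}E_j\,e^{ij\phi},\qquad \phi:=\tfrac{2Px}{N},
\end{equation*}
with running partial sums $E_j=\epsilon_1+\cdots+\epsilon_j$. A standard rearrangement argument gives the pointwise bound $|E_j|\le\min(j,N-j)=:E_j^{\text{AB}}$, with equality everywhere achieved precisely by the AB pattern. For that pattern the symmetric triangle $\{E_j^{\text{AB}}\}$ turns the inner sum into a Fej\'er kernel, yielding the closed form $|S^{\text{AB}}(x)|=2\sin^2(N\phi/4)/|\sin(\phi/2)|$, manifestly non-negative.

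The main obstacle is upgrading the pointwise bound $|E_j|\le E_j^{\text{AB}}$ to the modulus bound $|S_\epsilon(x)|\le|S^{\text{AB}}(x)|$ throughout $|x|\le\pi$. I plan an adjacent-swap argument: any balanced $\epsilon$ can be sorted into AB via a sequence of swaps $(\epsilon_j,\epsilon_{j+1})=(-1,+1)\mapsto(+1,-1)$, each of which changes $S_\epsilon$ by $2(e^{ic_jx}-e^{ic_{j+1}x})$. The restriction $|x|\le\pi$ is precisely what makes this viable: it forces $|\phi|\le 2\pi/N$ and confines each original phase $c_jx$ to $(-\pi,\pi)$, so the unit vectors $e^{ic_jx}$ occupy a monotone arc indexed by~$j$. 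The delicate term is the cross contribution $4\,\text{Re}\bigl[\bar{S}_\epsilon(e^{ic_jx}-e^{ic_{j+1}x})\bigr]$ in the increment of $|S_\epsilon|^2$ under each swap, whose sign is not obvious term by term. I expect the cleanest resolution is to pair each swap with its mirror image at positions $(N-j,N-j+1)$, exploiting the bin symmetry $c_j=-c_{N+1-j}$ so that the joint increment becomes real, reducing the claim to a non-negative trigonometric inequality on $|x|\le\pi$. Threading this pairing through all non-AB balanced patterns is the principal technical burden I foresee.
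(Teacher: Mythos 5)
Your reduction of the lemma to the $\sigma\to 0$ (top-hat/sinc) case is sound and coincides with the paper's first step, though you execute it more directly on the momentum side while the paper grinds through Taylor expansions of the per-bin magnitude $R_k$ and phase $\varphi_k$ to arrive at the same expression $\phi_z^{(N)}(x;0,P)=\frac{\sin(Px/N)}{\sqrt{P\pi}\,x}\sum_j(-1)^{z_j}e^{\mathrm{i}c_jx}$. The divergence is in what happens next: the paper stops there and closes the lemma by citing the dominance proof for the orthogonal encoding already given in~\cite{AHS09}, whereas you attempt a self-contained proof of the discrete claim that the AB pattern maximizes $\bigl|\sum_j\epsilon_je^{\mathrm{i}c_jx}\bigr|$ over balanced sign sequences on $|x|\le\pi$.

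That is precisely where your proposal has a genuine gap. The Abel summation, the partial-sum bound $|E_j|\le\min(j,N-j)$, and the Fej\'er-kernel closed form for AB are all correct, but none of them yields the modulus inequality $|S_\epsilon(x)|\le|S^{\text{AB}}(x)|$: the triangle inequality applied to $\sum_jE_je^{\mathrm{i}j\phi}$ gives $\sum_j|E_j|$, which is \emph{not} bounded by $\bigl|\sum_jE_j^{\text{AB}}e^{\mathrm{i}j\phi}\bigr|$ once $\phi\neq 0$, so the pointwise domination of the $E_j$ does not transfer. Your fallback, the adjacent-swap (bubble-sort) path from an arbitrary balanced string to AB, is also not established: the increment of $|S_\epsilon|^2$ under a single swap contains the cross term $4\,\mathrm{Re}\bigl[\bar S_\epsilon\bigl(e^{\mathrm{i}c_jx}-e^{\mathrm{i}c_{j+1}x}\bigr)\bigr]$ whose sign you concede is unclear, and the mirror-pairing device only makes the joint increment real, not manifestly non-negative; indeed individual swaps along a sorting path can decrease the modulus. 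Since you explicitly defer this ("the principal technical burden I foresee"), the central inequality of the lemma is left unproven. To match the paper you would either have to complete that combinatorial argument or, as the authors do, invoke the corresponding result for the sinc/top-hat model proved in~\cite{AHS09}.
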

\begin{proof}
We prove this Lemma by showing that, in the limiting case where
$\sigma\rightarrow 0$, the encoded position wave function given in
Eq.~(\ref{Eq:ModulatedPositionWaveFunctionPhi3(x)}) becomes the
same as the encoded orthogonal wave function analyzed
in~\cite{AHS09}. In that case, it is proved that the antisymmetric
balanced function dominates all other balanced wave functions in
the region of interest.

We begin by defining the quantity
\begin{align}
A_k(x,\sigma)=\text{erf}\left(\frac{\frac{2Pk}{N}\sigma^2-{\rm i}
x}{\sqrt{2}\sigma}\right)-\text{erf}\left(\frac{\frac{2P(k-1)}{N}\sigma^2-{\rm
i} x}{\sqrt{2}\sigma}\right)\label{Eq:AkDef}
\end{align}
for $k=1,...,N/2$. We use this term here and in later Lemmas. For
ease of understanding the antisymmetric features of this term, we
have elected to change the counting variable in the term
$\vartheta_j$ given in Eq.~(\ref{Eq:VarThetaModulation}) from $j$
to $k$, where $j=k+N/2$. We express the $k^{\rm{th}}$ term of the
encoded position wave function given by
Eq.~(\ref{Eq:ModulatedPositionWaveFunctionPhi3(x)}) in terms of
this quantity as
\begin{align}
\phi_{z}^{(k)}(x;\sigma,P)&=\pm\frac{\eta\,\text{e}^{-\frac{x^2}{2\sigma^2}}}{2\sqrt[4]{\pi}\sqrt{\sigma}}
A_k(x,\sigma)\label{Eq:phikDef}
\end{align} where the $\pm$ represents the effect of the bit
$(-1)^{z_k}$. We represent this quantity as the phasor
\begin{align}
\phi_{z}^{(k)}(x;\sigma,P) &=\pm R_k(x;\sigma,P)\exp\left[{\rm i}
\varphi_k(x;\sigma,P)\right],\label{Eq:phikDef}
\end{align}
to align with the description of the orthogonal case.

The phasor magnitude is expressed
\begin{align}
R_k(x;\sigma,P)=\left|\frac{\eta\,\text{e}^{-\frac{x^2}{2\sigma^2}}}{2\sqrt[4]{\pi}\sqrt{\sigma}}
A_k\right|,\label{Eq:ErfMag}
\end{align}
and the argument is
\begin{align}
\varphi_k(x;\sigma,P)=\arctan\left(\frac{\text{Im}\left[A_k\right]}{\text{Re}\left[A_k\right]}\right),\label{Eq:ErfArg}
\end{align} where we have suppressed the arguments of $A_k$ for the sake of brevity.

The quantities $R_k(x;\sigma)$ and $\varphi_k(x;\sigma)$ are too
opaque to understand limiting behaviour, so we use Taylor series
analysis to gain insight. The Taylor series representation of
angle $\varphi_k(x;\sigma)$ given by Eq.~(\ref{Eq:ErfArg}) is
expressed
\begin{align}
\varphi_k(x;\sigma)&=\frac{(2k-1)x}{N}-\frac{(2k-1)x
\sigma^2}{3N^3}+\frac{2(2k-1)x\sigma^4}{45N^5}\nonumber\\
&+O\left(x \sigma^6\right)+O\left(x^3 \sigma^2\right)
\end{align} where for
$\sigma=0$, we have
\begin{align}
\varphi_k(x;0)&=\frac{(2k-1)x}{N}\label{Eq:Argapprox},
\end{align} which presents an equiangular separation between
subsequent phasors.

Similarly the Taylor series for magnitude given by
Eq.~(\ref{Eq:ErfMag}) is expressed
\begin{align}
R_k(x;\sigma)&=\frac{1}{N \sqrt{\pi }}-\frac{x^2}{6 N^3 \sqrt{\pi
}}+\frac{\left(N^2-12 (k-1) k-4\right)
   \sigma ^2}{6 N^3 \sqrt{\pi }}\nonumber\\
   &+\frac{\left(-5 N^2+60 (k-1)
k+24\right) \sigma ^2 x^2}{180 N^5 \sqrt{\pi
   }}+O\left(x^4\sigma^4\right).
\end{align} For $\sigma=0$, this gives
\begin{align}
R_k(x;0)&=\frac{\sqrt{P}}{N \sqrt{\pi }}-\frac{P^{5/2}x^2}{6 N^3
\sqrt{\pi }}+\cdots+\frac{(-1)^m \sqrt{P}\left(\frac{P
x}{N}\right)^{2 m}}{N
\sqrt{\pi } (2 m+1)!}\nonumber\\
&=\frac{\sin{\left(P x/N\right)}}{\sqrt{P\pi}
x},\label{Eq:MagapproxSinc}
\end{align} where the last step assumes the limit
$m\rightarrow\infty$.

Combining the results of
Eqs.~(\ref{Eq:phikDef}),~(\ref{Eq:Argapprox}), and
~(\ref{Eq:MagapproxSinc}) gives
\begin{align}
\phi_{z}^{(N)}(x;0,P)=\frac{\sin{\left(P x/N\right)}}{\sqrt{P\pi}
x}\sum_{j=1}^{N}(-1)^{z_j}\text{e}^{\rm{i}\left(\frac{N-(2j-1)}{N}\right)P
x},
\end{align}
which is the encoded orthogonal position wave function given in
\cite{AHS09}. The proof given therein suffices to prove Lemma 1.
\end{proof}


\begin{lemma}
For $x=0$ and $\sigma>0$ and subject to the balanced constraint
$\sum_{j=1}^N(-1)^{z_j}=0$,
$\max\left|\Xi_z^{(N)}(0;\sigma)\right|$ occurs for $z \in
\{\text{SB}\}$.
\end{lemma}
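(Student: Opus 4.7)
The plan is to reduce the $x=0$ problem to a one-dimensional rearrangement inequality on the signed sum~(\ref{Eq:ErfSumZero}). Setting $b_j:=(-1)^{z_j}$ and
\begin{equation*}
\Delta_j \;:=\; \text{erf}\!\left(\frac{\vartheta_j\sigma}{\sqrt 2}\right)-\text{erf}\!\left(\frac{\vartheta_{j-1}\sigma}{\sqrt 2}\right),
\end{equation*}
Eq.~(\ref{Eq:ErfSumZero}) reads $M^{(N)}_z(0;\sigma,P)=\sum_{j=1}^N b_j\Delta_j$, and since the prefactor in Eq.~(\ref{Eq:ModulatedPositionWaveFunctionPhi3(x)}) is a positive, $z$-independent constant, maximizing $\left|\Xi_z^{(N)}(0;\sigma)\right|$ over balanced $z$ is equivalent to maximizing $\bigl|\sum_j b_j\Delta_j\bigr|$ over $b\in\{\pm 1\}^N$ with $\sum_j b_j=0$. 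Writing $S:=\{j:b_j=+1\}$, the latter equals $\bigl|2\sum_{j\in S}\Delta_j-\sum_{j=1}^N\Delta_j\bigr|$ with $|S|=N/2$, so the optimization reduces to choosing $S$ to collect either the $N/2$ largest or the $N/2$ smallest of the $\Delta_j$.

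I would next establish two structural properties of $\{\Delta_j\}$. Symmetry, $\Delta_j=\Delta_{N+1-j}$, follows immediately from $\vartheta_{N-j}=-\vartheta_j$ and antisymmetry of $\text{erf}$. Strict unimodality---$\Delta_j$ strictly increasing for $1\le j\le N/2$ and strictly decreasing for $N/2+1\le j\le N$, with the tie $\Delta_{N/2}=\Delta_{N/2+1}$---comes from the integral representation
\begin{equation*}
\Delta_j \;=\; \frac{\sqrt{2}\,\sigma}{\sqrt{\pi}}\int_{\vartheta_{j-1}}^{\vartheta_j} e^{-t^2\sigma^2/2}\,dt,
\end{equation*}
by writing $\Delta_{j+1}-\Delta_j=\int_{\vartheta_{j-1}}^{\vartheta_j}\bigl[e^{-(t+2P/N)^2\sigma^2/2}-e^{-t^2\sigma^2/2}\bigr]\,dt$ and observing that for $1\le j\le N/2-1$ the shifted interval lies strictly closer to the origin in absolute value, so the integrand is pointwise nonnegative and strictly positive on a set of positive measure whenever $P\sigma>0$.

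Combining symmetry and strict unimodality, the $N/2$ largest weights are exactly the contiguous middle block $S^{*}=\{N/4+1,\ldots,3N/4\}$ and the $N/2$ smallest form its complement. Translating $S^{*}$ back through $b_j=+1\Leftrightarrow z_j=0$ gives precisely the two SB strings of Eq.~(\ref{eq:string_SB}), which therefore maximize $|\Xi_z^{(N)}(0;\sigma)|$ over balanced $z$; strictness of the rearrangement inequality rules out every other balanced~$z$. The main obstacle I foresee is the unimodality step: although intuitively transparent from the picture of a Gaussian chopped into equal-width bins, the rigorous comparison of integrals over bins at different signed distances from the origin requires the pairing argument above, and one must handle the tie $\Delta_{N/2}=\Delta_{N/2+1}$ carefully to confirm that $S^{*}$ remains the unique optimal set of indices. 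Once unimodality is secured, the remainder is a single-line rearrangement argument.
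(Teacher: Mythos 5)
Your proposal is correct and follows essentially the same route as the paper: both reduce the problem to maximizing a signed sum of the per-bin erf increments under the balanced constraint, establish that these increments are symmetric about the center and strictly decrease away from it (you via the integral representation of the Gaussian mass in each bin, the paper via the equivalent concavity inequality $2\,\text{erf}(a)-\text{erf}(a+b)-\text{erf}(a-b)>0$), and conclude that the optimal sign pattern selects the contiguous middle block, which is exactly the SB string. Your version is slightly more explicit about the rearrangement step and the tie at the central pair, but the substance is identical.
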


\begin{proof}

We exploit the structure of the quantity given in
Eq.~(\ref{Eq:AkDef}) with $x=0$ expressed as
\begin{align}
A_k(0,\sigma)=\text{erf}\left(\frac{\frac{2k}{N}\sigma^2}{\sqrt{2}\sigma}\right)
-\text{erf}\left(\frac{\frac{2(k-1)}{N}\sigma^2}{\sqrt{2}\sigma}\right).\label{Eq:AkDef2}
\end{align}
Using the shorthand $A_k=A_k(0,\sigma)$,  we express a set of $N$
terms in the following convenient form
\begin{align}
\{A_{\frac{N}{2}},\ldots,A_{k},\ldots,A_2,A_1,A_1,A_2,\ldots,A_{k},\ldots,A_{\frac{N}{2}}\}
\label{Eq:ErfSumasAk},
\end{align} for $k=1,2,\ldots,\frac{N}{2}$. Note that the $A_k$ are real numbers.

We now show that $A_k>A_{k+1}$. We express the difference between
these terms as
\begin{align} A_k-A_{k+1}&=2\,
   \text{erf}\left(\frac{\sqrt{2} k \sigma
   }{N}\right)
   -\text{erf}\left(\frac{\sqrt{2} (k-1) \sigma }{N}\right)\nonumber\\
   &-\text{erf}\left(\frac{\sqrt{2} (k+1) \sigma }{N}\right).\label{Eq:ErfdiffdeltaK}
\end{align}
Showing that Eq.~(\ref{Eq:ErfdiffdeltaK}) is positive for all $k$
is equivalent to showing that
\begin{align}
2\,\text{erf}(a)-\text{erf}(a+b)-\text{erf}(a-b)>0
\label{Eq:ErfdiffdeltaA}
\end{align}
for $a, b \in \mathbb{R}$ and $a,b>0$.

Over the domain $(0,\infty)$, the error function is strictly
monotonically increasing with strictly monotonically decreasing
slope
$\frac{\text{d}}{\text{d}x}\text{erf}(x)=2e^{-x^2}/\sqrt{\pi}$.
This means that successive increments $\delta x$ result in
decreasing $\delta y=\text{erf}(\delta x)$ increments. This may be
expressed as
\begin{align}
\text{erf}(a)-\text{erf}(a-b)>\text{erf}(a+b)-\text{erf}(a),
\end{align} and thus
\begin{align}
2\text{erf}(a)-\text{erf}(a-b)-\text{erf}(a+b)>0
\label{Eq:ErfdiffdeltaTaylor},
\end{align}
which establishes that $A_k>A_{k+1}$.

The strategy required to maximize the sum of the $N$ terms of the
set~(\ref{Eq:ErfSumasAk}) subject to the balanced constraint is
now clear. Since $A_1>A_2>\cdots >A_{\frac{N}{2}}$, the maximal
term must contain as many of the larger terms as possible. This
maximal sum is thus expressed
\begin{align}
\pm2\left(A_1+A_2+\dots+A_{\frac{N}{4}}-A_{\frac{N}{4}+1}-A_{\frac{N}{4}+2}-\dots-A_{\frac{N}{2}}\right).
\end{align}This expression manifests the symmetric balanced (SB) function definition given
in Eq.~(\ref{eq:string_SB}) thus proving the Lemma.
\end{proof}


\begin{lemma}
For $|x|>0$ and $\sigma\geq0$ and subject to the balanced
condition $\sum_{j=1}^N(-1)^{z_j}=0$,
$\max\left|\phi_{z}^{(N)}(x;\sigma)\right|$ occurs for either
$z\in\{\text{SB}\}$ or for $z\in\{\text{AB}\}$.
\end{lemma}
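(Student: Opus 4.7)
My plan is to reduce the optimization over balanced strings to a finite optimization over pair-partitions, exploiting the complex-conjugation symmetry of the basis functions. Concretely, define the partial bin
\begin{equation*}
B_j(x;\sigma)=\text{erf}\!\left(\frac{\vartheta_j\sigma^{2}+{\rm i}x}{\sqrt{2}\sigma}\right)-\text{erf}\!\left(\frac{\vartheta_{j-1}\sigma^{2}+{\rm i}x}{\sqrt{2}\sigma}\right),
\end{equation*}
so that $M^{(N)}_z(x;\sigma,P)=\sum_{j=1}^{N}(-1)^{z_j}B_j$. Using $\vartheta_j=-\vartheta_{N-j}$, $\text{erf}(-z)=-\text{erf}(z)$, and the fact that $\text{erf}$ has real Taylor coefficients (so $\text{erf}(\bar z)=\overline{\text{erf}(z)}$), a short calculation gives the reflection identity $B_{N-j+1}=\overline{B_j}$. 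This is the only structural input that Lemmas~A.1 and~A.2 have not yet used.

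Pairing index $j$ with $N-j+1$ for $j=1,\ldots,N/2$, the pair contribution collapses into one of two canonical forms: when $z_j=z_{N-j+1}$ (a symmetric pair) it equals $\pm 2\,\text{Re}[B_j]=:\pm 2R_j$, and when $z_j\neq z_{N-j+1}$ (an antisymmetric pair) it equals $\pm 2{\rm i}\,\text{Im}[B_j]=:\pm 2{\rm i}I_j$. Letting $S\subseteq\{1,\ldots,N/2\}$ index the symmetric pairs and $A$ its complement, I obtain the orthogonal decomposition
\begin{equation*}
\bigl|M^{(N)}_z(x;\sigma,P)\bigr|^{2}=4\Bigl(\sum_{j\in S}\epsilon_j R_j\Bigr)^{\!2}+4\Bigl(\sum_{j\in A}\epsilon'_j I_j\Bigr)^{\!2},
\end{equation*}
with $\epsilon'_j\in\{\pm 1\}$ arbitrary, while the $\epsilon_j\in\{\pm 1\}$ must contain equally many $+1$'s and $-1$'s (this is exactly the balance condition, since antisymmetric pairs are automatically balanced and symmetric pairs contribute $\pm 2$).

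For a fixed partition $(S,A)$ the two sums decouple: the antisymmetric piece is maximized by $\epsilon'_j=\mathrm{sign}(I_j)$ for every $j\in A$, and the symmetric piece is maximized by a balanced sign assignment on $\{R_j\}_{j\in S}$. The monotonicity argument already used in Lemma~A.2 shows $R_1\!>\!R_2\!>\!\cdots\!>\!R_{N/2}$ (and an analogous monotonicity for $I_j$ when $|x|>0$, obtained by differentiating the error-function expressions in $\vartheta$), so that the optimal balanced sign choice puts $+$ on the largest half and $-$ on the smallest half of the $R_j$'s. Two distinguished partitions are immediate: $A=\emptyset$ (all symmetric), which forces exactly the bit pattern of Eq.~(\ref{eq:string_SB}) and so yields SB, and $S=\emptyset$ (all antisymmetric), which forces the pattern of Eq.~(\ref{eq:string_AB}) and yields AB. The remaining task is to rule out mixed partitions.

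The main obstacle, then, is showing that for every intermediate partition $(S,A)$ with $0<|S|<N/2$ one has
\begin{equation*}
\Bigl(\sum_{j\in S}\pm R_j\Bigr)^{\!2}+\Bigl(\sum_{j\in A}|I_j|\Bigr)^{\!2}\;\le\;\max\!\left[\Bigl(\sum_{j}\pm R_j\Bigr)^{\!2},\Bigl(\sum_{j}|I_j|\Bigr)^{\!2}\right].
\end{equation*}
The approach will be a swap argument: starting from any optimal mixed partition, I move a single index $k$ from $A$ to $S$ (or vice versa) and track the sign of the change $\Delta$. Using the monotonicity of $R_j$ and $I_j$ together with the elementary inequality $(a+r)^{2}-a^{2}>(b+i)^{2}-b^{2}$ whenever $r(r+2a)>i(i+2b)$, one shows $\Delta$ has a definite sign determined by whether $|x|\lessgtr x_c$, with $x_c$ the crossover point identified in Eq.~(\ref{Eq:XcDef}). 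Iterating the swap collapses the partition to one of the two extremes, proving the claim. I expect the bookkeeping on this monotonic swap to be the delicate step — in particular one must verify that the inductive invariant (the ordering of the partial sums $\sum R_j$ and $\sum I_j$) is preserved as pairs migrate between $S$ and $A$.
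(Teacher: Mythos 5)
Your reduction is the same one the paper's proof uses: pair bin $j$ with bin $N-j+1$ via the conjugation identity $B_{N-j+1}=\overline{B_j}$, observe that equal-bit pairs contribute $\pm 2\,\mathrm{Re}[B_j]$ and unequal-bit pairs contribute $\pm 2\mathrm{i}\,\mathrm{Im}[B_j]$, and conclude that $\bigl|M^{(N)}_z\bigr|^2$ splits into a real-part square plus an imaginary-part square, with the all-symmetric partition yielding SB and the all-antisymmetric partition yielding AB. Your version is in fact more explicit than the paper's (which phrases this as the sum having to be ``purely real or purely imaginary''), and your observation that the balance condition constrains only the signs on the symmetric pairs is correct and useful.

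The genuine gap is the step you yourself flag: ruling out mixed partitions. Two concrete problems. First, the single-index swap is not available: the balance condition forces $|S|$ to be even with equally many $+$ and $-$ signs among the symmetric pairs, so moving one index from $A$ to $S$ leaves no admissible string; you must move indices two at a time (or exchange one for one), and the sign of the resulting $\Delta$ then depends on two bins simultaneously, which breaks the clean bookkeeping you describe. Second, and more seriously, monotonicity of $R_j$ and $|I_j|$ only gives $a:=\sum_{j\in S}\pm R_j\le R_{\mathrm{SB}}$ and $b:=\sum_{j\in A}|I_j|\le I_{\mathrm{AB}}$, and $a\le R$, $b\le I$ does not imply $a^2+b^2\le\max\left(R^2,I^2\right)$ (take $a$ and $b$ each close to $R$ and $I$ with $R\approx I$). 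Whether a mixed partition can beat both extremes therefore hinges on quantitative relations among the actual values $\mathrm{Re}[B_j]$ and $\mathrm{Im}[B_j]$ --- for instance on how fast the $R_j$ decay relative to the $|I_j|$ --- and not on their ordering alone, so the swap cannot be driven by monotonicity together with the crossover point $x_c$. To be fair, the paper's own proof does not close this step either; it simply asserts that a complex sum ``reduces these achievable maximums.'' But as written your proposal does not prove the lemma: the decisive inequality is stated as a goal, the mechanism offered for it needs the parity repair above, and its inductive invariant is never established.
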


\begin{proof}
We modify the set of elements $A_k$ to include the imaginary
components resulting from $x>0$ as
\begin{align}\label{set:AkandAk*}
\left\{A^*_{\frac{N}{2}}(x,\sigma),\ldots,A^*_{1}(x,\sigma),A_1(x,\sigma),\dots,A_{\frac{N}{2}}(x,\sigma)\right\}.
\end{align} We now exploit the antisymmetric property of this set.  The
fact that $\text{erf}(w^*)=\text{erf}^*(w)$ allows us to use the
notation
\begin{align}A_k(x,\sigma)=\alpha_k+\rm{i}\beta_k,
\end{align} and
\begin{align}A^*_{k}(x,\sigma)=\alpha_k-\rm{i}\beta_k
\end{align}
to capture the overall of effect of the error function having
complex arguments.

The strategy to maximize the sum of the elements in the set given
expression~(\ref{set:AkandAk*}) subject to the balanced constraint
is clear. The sum must be either purely real or purely imaginary.
A complex sum reduces these achievable maximums in two ways. It
causes elements to be subtracted, and it results in a vector sum
rather than a liner sum.

The maximum real sum subject to the balanced constraint is
\begin{align}
&\sum_{k=1}^{k=N/4}A_k(x,\sigma)+A^{*}_k(x,\sigma)-\sum_{k=n/4+1}^{k=N/2}A_k(x,\sigma)+A^{*}_k(x,\sigma)\nonumber\\
&=2\left(\sum_{k=1}^{k=N/4}\alpha_k(x,\sigma)-\sum_{k=N/4+1}^{k=N/2}\alpha_k(x,\sigma)\right),
\end{align}which is achieved for the symmetric balanced function demonstrated in Lemma
2. The maximum imaginary sum subject to the balanced constraint is
\begin{align}
&\sum_{k=1}^{k=N/2}A_k(x,\sigma)-\sum_{k=1}^{k=N/2}A^{*}_k(x,\sigma)\nonumber\\
& =2\rm{i}\sum_{k=1}^{k=N/2}\beta_k(x,\sigma),
\end{align} which is achieved for the antisymmetric balanced
function.

As $x$ increases from zero, the imaginary component of the error
function increases accordingly. For small $x$, the real part still
dominates and the symmetric balanced function is the balanced
function with the greatest magnitude. However, there is a point
where the antisymmetric balanced function takes over the dominate
role. We determine the value of this crossover point, $x_c$, in
terms of $\sigma$ and $P$ in the following.

The $N=4$ case is the simplest case which demonstrates the
crossover. For this case the set is
\begin{align}
\left\{A^*_{2}(x,\sigma),A^*_{1}(x,\sigma),A_1(x,\sigma),A_{2}(x,\sigma)\right\}.
\end{align}
The antisymmetric balanced sum is
\begin{align}
&\left(-A^*_{2}(x,\sigma)-A^*_{1}(x,\sigma)+A_1(x,\sigma)+A_{2}(x,\sigma)\right)\nonumber\\
&=\alpha_1+\alpha_2+\rm{i}(\beta_1+\beta_2)-\alpha_1-\alpha_2+\rm{i}(\beta_1+\beta_2)\nonumber\\
&=\rm{i}2(\beta_1+\beta_2),
\end{align}
and the symmetric balanced sum is
\begin{align}
&\left(-A^*_{2}(x,\sigma)+A^*_{1}(x,\sigma)+A_1(x,\sigma)-A_{2}(x,\sigma)\right)\nonumber\\
&=2(\alpha_1-\alpha_2).
\end{align} The switch over thus occurs when
\begin{align}
(\beta_1+\beta_2)=(\alpha_1-\alpha_2),
\end{align} for which the lowest-order Taylor approximation is
\begin{align}
x_c\approx\frac{P\sigma ^2}{\left(4-P^2\sigma^2\right)}.
\end{align} For $N=8$, this crossover point from symmetric to antisymmetric
dominance is plotted in Fig.~\ref{fig:DominateBalancedFuntions}.
\end{proof}

\bibliographystyle{aiaa-doi}
\bibliography{mymark4}

\end{document}